\documentclass[12pt,a4paper]{article}

\usepackage{amsmath,amssymb,amsfonts,amsthm}
\usepackage{bm}
\usepackage{graphicx}
\usepackage{blkarray}
\usepackage{indentfirst}
\usepackage{cleveref}

\makeatletter

\@addtoreset{equation}{section}
\makeatother

\newtheorem{theorem}{Theorem}[section]
\newtheorem{lemma}{Lemma}[section]
\newtheorem{corollary}{Corollary}[section]
\newtheorem{remark}{Remark}[section]
\newtheorem{definition}{Definition}[section]
\newtheorem{proposition}{Proposition}[section]

\newcommand{\ceil}[1]{\left\lceil #1 \right\rceil}

\newcommand{\F}{\mathbb F}
\newcommand{\PP}{\mathbb P}
\newcommand{\wt}{\operatorname{wt}}
\newcommand{\supp}{\operatorname{supp}}

\begin{document}

\title{A Griesmer-Type Bound for List-Decodable Linear Codes}

\author{Shengwei Liu\thanks{School of Mathematics, Guangxi University, Nanning 530004, Guangxi, China. E-mail: shwliu@gxu.edu.cn.}
\and Chunyan Qin\thanks{School of Mathematics, South China University of Technology, Guangzhou 510641, China. E-mail: chunyan\_qin@126.com.}}
\date{}
\maketitle

{\centering\section*{Abstract}}
A code $C\subseteq \F_q^n$ is $(\tau,L)$-list-decodable if every Hamming ball of radius $\tau$ contains at most $L$ codewords of $C$.
Here $\tau$ is the list-decoding radius, and $L$ is the list size.
Singleton-type bounds constrain the radius and the rate when $L$ is fixed.
These bounds are not the only possible constraints on list-decodable codes.
In this paper, we derive an upper bound on the list-decoding radius in terms of generalized Hamming weights.
As a consequence, for $1\le L\le q-1$, every $(\tau,L)$-list-decodable $q$-ary linear code has minimum distance at least
$
\tau+\left\lfloor \frac{\tau}{L}\right\rfloor+1.
$
Combining this lower bound with the classical Griesmer bound gives a Griesmer-type lower bound on the block length.
For $q=3^a$, we construct an explicit family of $q$-ary linear $[q+3,2,q+1]$ codes.
These codes are $(2q/3,2)$-list-decodable and meet the Griesmer-type bound with equality.
They do not attain the Singleton-type bound.
Thus the Griesmer-type bound can be a strict improvement over the Singleton-type bound.

\medskip
\noindent{\large\bf Keywords:~}\medskip list decoding, Griesmer bound, generalized Hamming weights

\noindent{\bf2010 Mathematics Subject Classification}: 94B05, 94B65.

\section{Introduction}
A central problem in coding theory is to determine how much redundancy is needed to correct a prescribed number of errors.
In classical unique decoding, the minimum distance is the decisive parameter.
A code with minimum distance $d$ can correct up to $\lfloor (d-1)/2\rfloor$ errors.
List decoding was introduced by Elias \cite{Elias1957}.
It allows the decoder to return a short list of candidate codewords.
This makes it possible to correct beyond the unique-decoding radius.
More precisely, let $\mathcal C\subseteq \F_q^n$ be a code.
The code $\mathcal C$ is $(\tau,L)$-list-decodable if every Hamming ball of radius $\tau$ in $\F_q^n$ contains at most $L$ codewords of $\mathcal C$.
The basic quantitative question is therefore the following.
For fixed alphabet size $q$, block length $n$, dimension or rate, and list size $L$, how large can the list-decoding radius $\tau$ be?

Elias \cite{Elias1991} showed that, for large alphabets and growing list sizes, the optimal relative radius approaches $1-R$.
This value is the list-decoding capacity, where $R$ is the rate.
Guruswami and Rudra constructed explicit capacity-achieving codes using folded Reed--Solomon codes \cite{GuruswamiRudra2006}.
Guruswami and Xing constructed explicit list-decodable codes that approach list-decoding capacity.
Their construction improves the required alphabet size \cite{GuruswamiXing2013,GuruswamiXing2022}.
Guo and Ron-Zewi gave an efficient list-decoding algorithm with constant alphabet size and constant list size \cite{GuoRonZewi2021}.
More recently, randomly punctured Reed--Solomon codes were shown to achieve list-decoding capacity over fields of linear or polynomial size.
This was proved by Alrabiah, Guruswami, and Li \cite{AlrabiahGuruswamiLi2023} and by Guo and Zhang \cite{GuoZhang2023}.
Brakensiek, Dhar, Gopi, and Zhang proved a corresponding result for algebraic-geometric codes.
Their result applies over fields of constant size \cite{BrakensiekDharGopiZhang2025}.

The classical Singleton bound \cite{Singleton1964} states that a $q$-ary linear $[n,k,d]$ code satisfies $d\le n-k+1$.
Codes attaining equality are MDS codes.
There are also Singleton-type bounds for list decoding \cite{ShangguanTamo2023,Roth2022,GoldbergShangguanTamo2024}.
If a $q$-ary code of size $M$ is $(\tau,L)$-list-decodable, then
\[
M\le Lq^{n-\tau-\lfloor \tau/L\rfloor}.
\]
Consequently, if \(C\) is a linear \([n,k]\) code and \(L<q\), then
\[
\tau+\left\lfloor \frac{\tau}{L}\right\rfloor\le n-k,
\]
or equivalently,
\[
\delta_L(\tau)=\tau+\left\lfloor \frac{\tau}{L}\right\rfloor+1\le n-k+1.
\]
Guruswami and Narayanan introduced average-radius list decoding and studied its combinatorial limitations \cite{GuruswamiNarayanan2014}.
Codes that attain the corresponding average-radius version of the Singleton-type bound are related to higher-order MDS codes.
Roth introduced higher-order MDS codes from the list-decoding viewpoint \cite{Roth2022}.
These codes are also called $\mathrm{LD\text{-}MDS}(L)$ codes.
Independently, Brakensiek, Gopi, and Makam introduced higher-order MDS codes in connection with maximally recoverable tensor codes.
See \cite{BrakensiekGopiMakam2022a}.
They proved two further facts.
First, their notion is closely related, through duality, to Roth's list-decoding definition.
Second, generic Reed--Solomon codes over exponentially large fields are higher-order MDS codes \cite{BrakensiekGopiMakam2023}.
Further developments include improved field-size bounds and polynomial codes for higher-order MDS codes.
See \cite{BrakensiekDharGopi2022,BrakensiekDharGopi2024}.

The starting point of this paper is the observation that Singleton-type bounds do not capture all finite-alphabet obstructions.
This phenomenon is already familiar in unique decoding.
The classical Griesmer bound \cite{Griesmer1960} states that every $q$-ary linear $[n,k,d]$ code satisfies
\[
n\ge \sum_{i=0}^{k-1}\left\lceil \frac{d}{q^i}\right\rceil.
\]
Unlike the Singleton bound, the Griesmer bound depends explicitly on $q$.
It can be strictly stronger when the distance is large relative to the alphabet size.
Many optimal linear codes attain the Griesmer bound without being MDS codes.
Solomon and Stiffler constructed infinite families of such codes \cite{SolomonStiffler1965}.
Baumert and McEliece proved existence results for fixed $q$ and $k$ with large distance \cite{BaumertMcEliece1973}.
Further constructions were obtained by Helleseth \cite{Helleseth1983,Helleseth1992} and by Ding and Heng \cite{DingHeng2019}.
Other examples include work by Hu, Li, Zeng, Wang, and Tang \cite{HuLiZengWangTang2022}, and by Chen \cite{chen2025}.
Thus, in unique decoding, Singleton optimality is not the only meaningful notion of optimality.
Griesmer-optimal codes form a distinct and important class.

The preceding discussion suggests that there may also be optimal list-decodable codes that do not meet the Singleton-type bound.
This motivates the main question of this paper.
Is there a Griesmer-type bound for list-decodable linear codes?
If such a bound exists, can one construct list-decodable codes that are optimal for it but do not attain the Singleton-type bound?

Our approach uses generalized Hamming weights.
For a linear code $C$, the $r$-th generalized Hamming weight $d_r(C)$ is the smallest support size of an $r$-dimensional subcode of $C$.
Related support-weight notions were considered by Helleseth, Kl{\o}ve, and Mykkeltveit \cite{Helleseth1977} and by Kl{\o}ve \cite{Klve1978}.
Wei introduced the generalized Hamming weight hierarchy.
He used it to characterize the cryptographic performance of a linear code over the wire-tap channel of type II \cite{wei1991}.
Generalized Hamming weights have applications in cryptography and in the study of structural properties of linear codes.
See, for example, \cite{Chor1985,Tsfasman1995,Kasami1993,Helleseth1995}.
They have also been studied extensively in their own right; see, for example, \cite{Ashikhmin1999,Heijnen1998,Van1994,Cherdieu2001,Xiong2016}.

We now summarize the main contributions of this paper.

\begin{enumerate}
\item We prove a generalized Hamming weight obstruction to list decoding.
If $C$ is $(\tau,L)$-list-decodable and $L\le q^r-1$, then
\[
d_r(C)\ge \delta_L(\tau)
=\tau+\left\lfloor \frac{\tau}{L}\right\rfloor+1.
\]
For $r=1$ and $L\le q-1$, this gives
\[
d(C)\ge \tau+\left\lfloor \frac{\tau}{L}\right\rfloor+1.
\]
This is a list-decoding analogue of the classical relation between minimum distance and unique-decoding radius.
In the special case $L=1$, it recovers the familiar condition $d\ge 2\tau+1$.

\item We combine this obstruction with the classical Griesmer bound.
This gives a Griesmer-type bound for $q$-ary linear list-decodable codes with small list size.
Specifically, if $1\le L\le q-1$ and $C$ is a $(\tau,L)$-list-decodable $q$-ary linear $[n,k]$ code, then
\[
n\ge \sum_{i=0}^{k-1}\left\lceil \frac{\delta_L(\tau)}{q^i}\right\rceil.
\]
This inequality implies the Singleton-type bound, because the last $k-1$ terms are at least $k-1$.
It is strictly stronger than the Singleton-type bound when $\delta_L(\tau)$ is large relative to the alphabet size.

\item For $q=3^a$, we construct an explicit family of $q$-ary linear $[q+3,2,q+1]$ codes.
These codes are $(2q/3,2)$-list-decodable and meet the Griesmer-type bound with equality.
They do not attain the Singleton-type bound.
This shows that the new bound is tight for this family.
It also shows that the new bound is a genuine improvement over the Singleton-type bound.
\end{enumerate}

The paper is organized as follows.
Section~2 proves the generalized Hamming weight obstruction and the resulting Griesmer-type bounds.
Section~3 gives an explicit construction of linear codes that meet the Griesmer-type bound with equality but do not attain the Singleton-type bound.
Section~4 concludes the paper.

\section{A Griesmer-type bound for list-decodable codes}
Throughout, $q$ is a prime power.
Also, $C\subseteq \F_q^n$ is a linear code, and $\wt(\cdot)$ denotes Hamming weight.

\begin{definition}[List decodability~\cite{Elias1957}]
Let $\tau\ge 0$ and $L\ge 1$ be integers.
We say that $C\subseteq \F_q^n$ is $(\tau,L)$-list-decodable if, for every $y\in\F_q^n$,
\[
|C\cap(y+B(n,\tau))|\leq L.
\]
Here $B(n,\tau)$ is the set of vectors in $\F_q^n$ with Hamming weight at most $\tau$.
\end{definition}

For fixed $L$, define
$$
\delta_L(\tau)=\tau+\left\lfloor \frac{\tau}{L}\right\rfloor+1
\qquad (\tau\in \mathbb{Z}_{\ge 0}).
$$

\begin{lemma}\label{lem:threshold}
For every integer $L\ge 1$ and every integer $\tau\ge 0$,
$$
\delta_L(\tau)-1=(L+1)\left\lfloor \frac{\tau}{L}\right\rfloor + r
$$
for some $r\in \{0,1,\dots,L-1\}$, and
$$
\left(\delta_L(\tau)-1\right)-\left\lfloor \frac{\delta_L(\tau)-1}{L+1}\right\rfloor=\tau.
$$
Equivalently, the function
$$
f_L(d)=d-\left\lfloor \frac{d}{L+1}\right\rfloor
$$
satisfies
$
f_L\bigl(\delta_L(\tau)-1\bigr)=\tau.
$
\end{lemma}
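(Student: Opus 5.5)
Write $\tau = Lm + r$ using Euclidean division, with $m=\lfloor \tau/L\rfloor$ and $r\in\{0,1,\dots,L-1\}$. Then
\[
\delta_L(\tau)-1=\tau+m=Lm+r+m=(L+1)m+r .
\]
This gives the first assertion, with the same residue $r$ that appears in the division of $\tau$ by $L$.

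For the second assertion, I evaluate $\lfloor(\delta_L(\tau)-1)/(L+1)\rfloor$. Since $0\le r\le L-1<L+1$, the expression $(L+1)m+r$ is itself a Euclidean division by $L+1$, with quotient $m$ and remainder $r$. Hence
\[
\left\lfloor \frac{\delta_L(\tau)-1}{L+1}\right\rfloor=m .
\]
Subtracting, I get
\[
(\delta_L(\tau)-1)-m=(L+1)m+r-m=Lm+r=\tau ,
\]
which is $f_L(\delta_L(\tau)-1)=\tau$. The equivalent formulation is simply a restatement with $f_L$ written out.

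There is essentially no obstacle. The only point needing care is that the remainder $r$ from division by $L$ also lies in $\{0,\dots,L\}$, so it is a valid remainder for division by $L+1$. In fact the argument gives slightly more: the range is $\{0,\dots,L-1\}$, so the value $r=L$ never occurs. This presumably matters later, for example for the maximality of $\delta_L(\tau)-1$ among the $d$ with $f_L(d)\le\tau$. I will not need it here.
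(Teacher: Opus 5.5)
Your proof is correct and matches the paper's argument essentially step for step. Both write $\tau=Lm+r$ by Euclidean division, rewrite $\delta_L(\tau)-1=(L+1)m+r$, note that $0\le r<L+1$ forces $\lfloor(\delta_L(\tau)-1)/(L+1)\rfloor=m$, and subtract to recover $\tau$; your side remark that $r\le L-1$ gives $f_L(\delta_L(\tau))=\tau+1$, so $\delta_L(\tau)-1$ is the largest $d$ with $f_L(d)\le\tau$, is also right, though the paper only needs this lemma together with monotonicity of $f_L$.
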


\begin{proof}
Write $\tau=Lu+r$ with $u\ge 0$ and $0\le r\le L-1$. Then
$$
\delta_L(\tau)-1=\tau+\left\lfloor \frac{\tau}{L}\right\rfloor
=Lu+r+u=(L+1)u+r.
$$
Hence
$$
\left\lfloor \frac{\delta_L(\tau)-1}{L+1}\right\rfloor=u,
$$
and therefore
$$
\left(\delta_L(\tau)-1\right)-\left\lfloor \frac{\delta_L(\tau)-1}{L+1}\right\rfloor
=(L+1)u+r-u=Lu+r=\tau.
$$
\end{proof}

We first prove that list decodability forces a lower bound on generalized Hamming weights.
The corresponding lower bound on the minimum distance follows as a corollary.
It is the list-decoding analogue of the familiar relation between minimum distance and unique-decoding radius.

\begin{definition}[Generalized Hamming weights~\cite{wei1991}]
For a subcode $D\le C$, write
$$
\supp(D)=\bigcup_{c\in D}\supp(c).
$$
For $1\le r\le k$, the $r$-th generalized Hamming weight of $C$ is
$$
d_r(C)=\min\{\,|\supp(D)| : D\le C,\ \dim D=r\,\}.
$$
\end{definition}

\begin{theorem}\label{thm:ghw-obstruction}
Let $L\ge 1$, let $\tau\ge 0$ be an integer, and let $r$ be an integer with $1\le r\le k$ and
$
L\le q^r-1.
$
If $C$ is $(\tau,L)$-list-decodable, then
$
d_r(C)\ge \delta_L(\tau)=\tau+\left\lfloor \frac{\tau}{L}\right\rfloor+1.
$
\end{theorem}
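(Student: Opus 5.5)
We argue by contraposition. Suppose $d_r(C)\le \delta_L(\tau)-1$. Then there is an $r$-dimensional subcode $D\le C$ with support $S=\supp(D)$ of size $s=|S|\le \delta_L(\tau)-1$. We will exhibit a center $y\in\F_q^n$ whose ball of radius $\tau$ contains at least $L+1$ codewords of $D$, contradicting $(\tau,L)$-list-decodability. Since $|D|=q^r\ge L+1$, we may pick $L+1$ distinct codewords $c_0=0,c_1,\dots,c_L$ of $D$. All of them vanish outside $S$. By Lemma~\ref{lem:threshold} and the monotonicity of $f_L$ (it is nondecreasing in $d$), we have $f_L(s)\le f_L(\delta_L(\tau)-1)=\tau$, that is, $s-\lfloor s/(L+1)\rfloor\le\tau$. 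So it suffices to find $y$ supported on $S$ with $\wt(y-c_j)\le s-\lfloor s/(L+1)\rfloor$ for every $j$.

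To construct $y$, we partition $S$ into $L+1$ blocks $S_0,\dots,S_L$ whose sizes are as equal as possible, so that each $|S_j|\ge\lfloor s/(L+1)\rfloor$. We then set $y$ equal to $c_j$ on $S_j$ and $y=0$ off $S$. Then $y-c_j$ vanishes on $S_j$ and outside $S$, so $\wt(y-c_j)\le s-|S_j|\le s-\lfloor s/(L+1)\rfloor\le\tau$ for each $j$. The $L+1$ distinct codewords $c_0,\dots,c_L$ therefore all lie in $y+B(n,\tau)$, which is the desired contradiction. Hence $d_r(C)\ge\delta_L(\tau)$.

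The main points to verify are the following. First, $f_L$ is nondecreasing: increasing $d$ by one raises $\lfloor d/(L+1)\rfloor$ by at most one. Second, the equal-size partition gives every block at least $\lfloor s/(L+1)\rfloor$ elements, and empty blocks are harmless when $s<L+1$. Third, the hypothesis $L\le q^r-1$ is used exactly to guarantee $L+1$ distinct codewords in $D$. Note that we never need the codewords to differ on each block: agreement of $y$ with $c_j$ on $S_j$ is all that is used. The only real obstacle is choosing the right center, and the block-assignment construction resolves it. No field structure beyond $|D|=q^r$ is required.
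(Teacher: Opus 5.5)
Your proof is correct and follows the paper's argument essentially step for step: you take a minimal-support $r$-dimensional subcode and $L+1$ distinct codewords in it, split the support into $L+1$ nearly equal blocks, and choose a center that agrees with $c_j$ on $S_j$ and is zero off $S$. You then conclude via $f_L(s)\le f_L(\delta_L(\tau)-1)=\tau$, using Lemma~\ref{lem:threshold} and the monotonicity of $f_L$; fixing $c_0=0$ is an inessential choice.
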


\begin{proof}
Set
$
\delta=\delta_L(\tau)=\tau+\left\lfloor \frac{\tau}{L}\right\rfloor+1.
$
Assume, for contradiction, that
$
d_r(C)\le \delta-1.
$
Choose an $r$-dimensional subcode $D\le C$ with
$
|\supp(D)|=d_r(C).
$
Write
$
S=\supp(D),
\qquad s=|S|=d_r(C).
$
Then $s\le \delta-1$. Since $\dim(D)=r$, we have $|D|=q^r\ge L+1$. Therefore we can choose $L+1$ distinct codewords
$$
c^{(0)},c^{(1)},\dots,c^{(L)}\in D.
$$

Partition $S$ into disjoint subsets
$$
S=S_0\sqcup S_1\sqcup \cdots \sqcup S_L
$$
so that each $S_j$ has size either $\lfloor s/(L+1)\rfloor$ or $\lceil s/(L+1)\rceil$. In particular,
$$
|S_j|\ge \left\lfloor \frac{s}{L+1}\right\rfloor
\qquad \text{for every }j.
$$

Define $y\in \F_q^n$ by
$$
y_i=
\begin{cases}
c^{(j)}_i,& i\in S_j,\\
0,& i\notin S.
\end{cases}
$$
Fix $j\in\{0,1,\dots,L\}$.
Since $c^{(j)}\in D$, this codeword is zero outside $S$.
By construction, $y$ and $c^{(j)}$ agree on every coordinate of $S_j$.
They are also both zero outside $S$.
Thus any disagreement between $y$ and $c^{(j)}$ can occur only on $S\setminus S_j$.
Therefore
$$
\wt\bigl(y-c^{(j)}\bigr)\le s-|S_j|
\le s-\left\lfloor \frac{s}{L+1}\right\rfloor.
$$

Now set
$$
f_L(u)=u-\left\lfloor \frac{u}{L+1}\right\rfloor
\qquad (u\in \mathbb Z_{\ge 0}).
$$
Then
$
\wt\bigl(y-c^{(j)}\bigr)\le f_L(s).
$
Also,
$$
f_L(u+1)-f_L(u)
=
1-\left(\left\lfloor \frac{u+1}{L+1}\right\rfloor-\left\lfloor \frac{u}{L+1}\right\rfloor\right)\in\{0,1\},
$$
so $f_L$ is nondecreasing. Since $s\le \delta-1$, Lemma~\ref{lem:threshold} gives
$
f_L(s)\le f_L(\delta-1)=\tau.
$
Therefore
$$
\wt\bigl(y-c^{(j)}\bigr)\le \tau
\qquad\text{for every }j=0,1,\dots,L.
$$
Thus the Hamming ball of radius $\tau$ centered at $y$ contains the $L+1$ distinct codewords
$$
c^{(0)},c^{(1)},\dots,c^{(L)},
$$
This contradicts the assumption that $C$ is $(\tau,L)$-list-decodable.
This proves
$
d_r(C)\ge \delta_L(\tau).
$
\end{proof}

Taking $r=1$ gives the following corollary.
\begin{corollary}\label{lem:distance}
Let $1\le L\le q-1$, and let $\tau\ge0$ be an integer.
Let $C$ be a $q$-ary linear $[n,k,d]$ code.
If $C$ is $(\tau,L)$-list-decodable, then
$
d\ge \delta_L(\tau)=\tau+\left\lfloor \frac{\tau}{L}\right\rfloor+1.
$
\end{corollary}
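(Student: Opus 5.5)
The corollary is the special case $r=1$ of Theorem~\ref{thm:ghw-obstruction}. I will check that the hypotheses of that theorem hold and then identify $d_1(C)$ with the minimum distance $d$.

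First, the hypothesis $1\le L\le q-1$ is exactly $L\le q^r-1$ when $r=1$. The condition $1\le r\le k$ requires $k\ge 1$. This holds implicitly, since the minimum distance of an $[n,k,d]$ code is only defined for nonzero codes; if needed I would state this convention explicitly. With these hypotheses in place, Theorem~\ref{thm:ghw-obstruction} gives $d_1(C)\ge \delta_L(\tau)$.

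Second, I will show that $d_1(C)=d$. A one-dimensional subcode has the form $D=\langle c\rangle$ for some nonzero $c\in C$. Scalar multiples of $c$ have the same support as $c$, so $\supp(D)=\supp(c)$ and $|\supp(D)|=\wt(c)$. Taking the minimum over all nonzero $c$ gives $d_1(C)=\min_{c\neq 0}\wt(c)$. By linearity this minimum weight equals the minimum distance $d$. Combining the two steps yields $d\ge \tau+\lfloor \tau/L\rfloor+1$.

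There is no real obstacle. The only point needing care is the degenerate case $k=0$, which the $[n,k,d]$ notation already excludes.
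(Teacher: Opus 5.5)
Your proposal is correct and matches the paper's own argument: the paper obtains this corollary by specializing Theorem~\ref{thm:ghw-obstruction} to $r=1$, where $L\le q^1-1$ is exactly the given hypothesis and $d_1(C)=d$. Your checks that $k\ge 1$ is needed and that $d_1(C)$ equals the minimum weight make explicit what the paper leaves unstated.
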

\begin{remark}
Corollary~\ref{lem:distance} gives an analogue of the unique-decoding relation between minimum distance and decoding radius:
$$
\text{minimum distance }d
\quad\leadsto\quad
\delta_L(\tau)=\tau+\left\lfloor \frac{\tau}{L}\right\rfloor+1.
$$
When $L=1$, this substitution gives the ordinary unique-decoding case.
\end{remark}

We now prove the main result of this paper.
We use the classical Griesmer bound.

\begin{theorem}[Griesmer bound~\cite{Griesmer1960}]\label{thm:griesmer}
Let $C$ be a $q$-ary linear $[n,k,d]$ code. Then
$$
n\ge \sum_{i=0}^{k-1}\left\lceil \frac{d}{q^i}\right\rceil.
$$
\end{theorem}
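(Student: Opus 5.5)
The statement is the classical Griesmer bound, and I would prove it by induction on the dimension $k$, using the residual code construction.

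The base case $k=1$ is immediate. A one-dimensional code with minimum distance $d$ is spanned by a single codeword of weight at least $d$, so $n\ge d=\lceil d/q^0\rceil$.

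For the inductive step, let $k\ge 2$ and pick a codeword $c\in C$ of weight exactly $d$. After permuting coordinates and rescaling each coordinate, which changes neither the parameters nor the bound, I may assume $c=(1,\dots,1,0,\dots,0)$ with support $S$, $|S|=d$. Define the residual code $C'$ as the puncturing of $C$ on $S$, so each codeword is restricted to the $n-d$ coordinates outside $S$. The argument then needs two facts.

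The first is that $\dim C'=k-1$. The kernel of the restriction map consists of codewords supported inside $S$. Such a codeword $x$, if it is not a multiple of $c$, can be combined as $x-\alpha c$, with $\alpha$ chosen to cancel one coordinate of $S$ on which $x$ is nonzero. This gives a nonzero codeword of weight less than $d$, a contradiction. Hence the kernel is exactly $\langle c\rangle$, and $C'$ is an $[n-d,k-1]$ code. (If the kernel were larger, the same argument would produce a codeword of weight less than $d$, so this also handles that case.)

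The second fact, which I expect to be the main obstacle, is that $d(C')\ge\lceil d/q\rceil$. Take a codeword $x\in C$ whose restriction $x'$ is nonzero. By pigeonhole, some $\alpha\in\F_q$ occurs at least $\lceil d/q\rceil$ times among the entries $x_i$, $i\in S$. The codeword $x-\alpha c$ is nonzero, since its restriction $x'$ is nonzero, so it has weight at least $d$. Its weight on $S$ is at most $d-\lceil d/q\rceil$, and its weight outside $S$ equals $\wt(x')$. Therefore
\[
\wt(x')\ge d-\bigl(d-\lceil d/q\rceil\bigr)=\lceil d/q\rceil.
\]

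To finish, apply the induction hypothesis to $C'$ with distance parameter $d'=d(C')\ge\lceil d/q\rceil$:
\[
n-d\ge \sum_{i=0}^{k-2}\left\lceil \frac{d'}{q^i}\right\rceil
\ge \sum_{i=0}^{k-2}\left\lceil \frac{\lceil d/q\rceil}{q^i}\right\rceil .
\]
The identity $\lceil \lceil d/q\rceil/q^i\rceil=\lceil d/q^{i+1}\rceil$ turns the right-hand side into $\sum_{i=1}^{k-1}\lceil d/q^i\rceil$. Adding $d$ to both sides gives the bound.

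Since the paper cites this as a classical result from \cite{Griesmer1960}, it may simply be quoted rather than reproved, but the argument above is the one I would give.
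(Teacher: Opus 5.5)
Your proof is correct and complete: it is the standard residual-code induction (essentially Griesmer's original argument), with the kernel computation, the pigeonhole bound $d(C')\ge\lceil d/q\rceil$, and the nested-ceiling identity all handled properly. The paper gives no proof of its own and simply quotes the result from Griesmer's 1960 paper, so your argument is exactly the classical proof that the citation stands for.
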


The main result is the following.

\begin{theorem}\label{thm:main}
Let $1\le L\le q-1$, and let $\tau\ge0$ be an integer.
Let $C$ be a $q$-ary linear $[n,k]$ code.
If $C$ is $(\tau,L)$-list-decodable, then
$$
n \ge \sum_{i=0}^{k-1}\left\lceil \frac{\delta_L(\tau)}{q^i}\right\rceil
=
\sum_{i=0}^{k-1}\left\lceil \frac{\tau+\lfloor \tau/L\rfloor+1}{q^i}\right\rceil.
$$
\end{theorem}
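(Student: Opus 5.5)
The plan is to combine Corollary~\ref{lem:distance} with the classical Griesmer bound (Theorem~\ref{thm:griesmer}), using that the Griesmer sum is monotone in $d$. Let $C$ be a $(\tau,L)$-list-decodable $q$-ary linear $[n,k]$ code with $1\le L\le q-1$, and let $d$ be its minimum distance. I assume $k\ge 1$. If $k=0$, the right-hand side is an empty sum, equal to $0$, and the inequality $n\ge 0$ is trivial.

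The first step applies Corollary~\ref{lem:distance}, whose hypothesis $L\le q-1$ is exactly the one in the statement. It gives
\[
d\ge \delta_L(\tau)=\tau+\left\lfloor \frac{\tau}{L}\right\rfloor+1 .
\]
The second step applies Theorem~\ref{thm:griesmer} to $C$ as an $[n,k,d]$ code, which gives
\[
n\ge \sum_{i=0}^{k-1}\left\lceil \frac{d}{q^i}\right\rceil .
\]

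The third step uses monotonicity. For each fixed $i$, the map $x\mapsto \lceil x/q^i\rceil$ is nondecreasing in $x$. Since $d\ge \delta_L(\tau)$, each term satisfies
\[
\left\lceil \frac{d}{q^i}\right\rceil\ge \left\lceil \frac{\delta_L(\tau)}{q^i}\right\rceil .
\]
Summing over $i=0,\dots,k-1$ and chaining with the Griesmer inequality yields the stated bound on $n$.

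No step here is a real obstacle, since all the substantive work is already done in Theorem~\ref{thm:ghw-obstruction}. The only point to watch is the edge case $k=0$, which is handled above.
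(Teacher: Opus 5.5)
Your proposal is correct and is essentially the paper's own proof. The paper also treats $k=0$ as an empty sum, then combines $d\ge\delta_L(\tau)$ from Corollary~\ref{lem:distance} with the Griesmer bound and the monotonicity of $x\mapsto\lceil x/q^i\rceil$.
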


\begin{proof}
If $k=0$, then the sum on the right is empty.
In this case, the claim is trivial.
Assume henceforth that $k\ge 1$.

By Corollary~\ref{lem:distance}, the minimum distance $d$ of $C$ satisfies
$
d\ge \delta_L(\tau).
$
Applying the Griesmer bound (Theorem~\ref{thm:griesmer}) to $C$, we get
$$
n\ge \sum_{i=0}^{k-1}\left\lceil \frac{d}{q^i}\right\rceil.
$$
Since the ceiling function is nondecreasing and $d\ge \delta_L(\tau)$,
$$
n\ge \sum_{i=0}^{k-1}\left\lceil \frac{\delta_L(\tau)}{q^i}\right\rceil.
$$
This proves the claimed inequality.
\end{proof}

When $L=1$, Theorem~\ref{thm:main} becomes
$$
n\ge \sum_{i=0}^{k-1}\left\lceil \frac{2\tau+1}{q^i}\right\rceil,
$$
which is exactly the classical Griesmer bound after substituting the unique-decoding necessary condition $d\ge 2\tau+1$.

Theorem~\ref{thm:main} implies a whole hierarchy of $q$-dependent refinements.

\begin{corollary}\label{cor:hierarchy}
Under the hypotheses of Theorem~\ref{thm:main}, for every integer $s$ with $1\le s\le k$,
$$
\sum_{i=0}^{s-1}\left\lceil \frac{\delta_L(\tau)}{q^i}\right\rceil \le n-k+s.
$$
In particular:
\begin{align*}
\delta_L(\tau) &\le n-k+1,\\
\delta_L(\tau)+\left\lceil \frac{\delta_L(\tau)}{q}\right\rceil &\le n-k+2
\qquad (k\ge 2).
\end{align*}
\end{corollary}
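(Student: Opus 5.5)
We derive the hierarchy directly from Theorem~\ref{thm:main} by bounding the tail of the Griesmer sum crudely. Fix $s$ with $1\le s\le k$ and write $\delta=\delta_L(\tau)$. Theorem~\ref{thm:main} gives
$$
n\ge \sum_{i=0}^{s-1}\left\lceil \frac{\delta}{q^i}\right\rceil+\sum_{i=s}^{k-1}\left\lceil \frac{\delta}{q^i}\right\rceil .
$$
Since $\delta\ge 1$, every term $\lceil \delta/q^i\rceil$ is a positive integer and hence at least $1$. The second sum has $k-s$ terms, so it is at least $k-s$. Rearranging yields
$$
\sum_{i=0}^{s-1}\left\lceil \delta/q^i\right\rceil\le n-(k-s)=n-k+s,
$$
which is the claimed inequality. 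When $s=k$ the tail sum is empty and the statement is exactly Theorem~\ref{thm:main}.

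The two displayed special cases are the instances $s=1$ and $s=2$. For $s=1$, the left side is $\lceil \delta/q^0\rceil=\delta$, which recovers the Singleton-type bound $\delta_L(\tau)\le n-k+1$. For $s=2$, which requires $k\ge 2$, the left side is $\delta+\lceil \delta/q\rceil$, and the right side is $n-k+2$.

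There is no real obstacle here. The only point needing care is that $\delta_L(\tau)\ge 1$, so that each ceiling is at least $1$. This holds because $\tau\ge 0$ gives $\delta_L(\tau)=\tau+\lfloor\tau/L\rfloor+1\ge 1$. The hypothesis $k\ge 1$ is implicit in requiring $1\le s\le k$.
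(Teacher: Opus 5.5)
Your proposal is correct and follows essentially the same argument as the paper: split the sum from Theorem~\ref{thm:main} at index $s$, bound each of the $k-s$ tail terms below by $1$, and rearrange. Your explicit check that $\delta_L(\tau)\ge 1$, which guarantees each ceiling is at least $1$, is a detail the paper leaves implicit.
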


The first line is the usual sharpened Singleton-type necessary condition.
The second line is the first genuinely $q$-dependent refinement.

\begin{proof}
By Theorem~\ref{thm:main},
$$
n\ge \sum_{i=0}^{k-1}\left\lceil \frac{\delta_L(\tau)}{q^i}\right\rceil.
$$
For $i\ge s$, each summand is at least $1$. Hence
$$
n \ge \sum_{i=0}^{s-1}\left\lceil \frac{\delta_L(\tau)}{q^i}\right\rceil + (k-s),
$$
which rearranges to
$$
\sum_{i=0}^{s-1}\left\lceil \frac{\delta_L(\tau)}{q^i}\right\rceil \le n-k+s.
$$
The displayed special cases are $s=1$ and $s=2$.
\end{proof}

\begin{proposition}\label{prop:strict}
Let $\Delta=\delta_L(\tau)$. Assume the hypotheses of Theorem~\ref{thm:main} and, in addition, $k\ge 1$. Then
\[
\Delta + \sum_{i=1}^{k-1}\left(\left\lceil \frac{\Delta}{q^i}\right\rceil-1\right)\le n-k+1.
\]
Hence Theorem~\ref{thm:main} strictly sharpens the sharpened Singleton-type inequality $\Delta\le n-k+1$ whenever
\[
\sum_{i=1}^{k-1}\left(\left\lceil \frac{\Delta}{q^i}\right\rceil-1\right)\ge 1.
\]
A sufficient condition for strict improvement is $k\ge 2$ and $\Delta\ge q+1$.
\end{proposition}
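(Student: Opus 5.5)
The plan is to read the displayed inequality directly off Theorem~\ref{thm:main} by subtracting $k-1$ from both sides. With $\Delta=\delta_L(\tau)$ and $k\ge 1$, Theorem~\ref{thm:main} gives
\[
n\ge \Delta+\sum_{i=1}^{k-1}\left\lceil \frac{\Delta}{q^i}\right\rceil,
\]
since the $i=0$ term is $\lceil \Delta\rceil=\Delta$. Writing $\sum_{i=1}^{k-1}\lceil \Delta/q^i\rceil=\sum_{i=1}^{k-1}(\lceil \Delta/q^i\rceil-1)+(k-1)$ and moving $k-1$ to the left side yields
\[
\Delta+\sum_{i=1}^{k-1}\left(\left\lceil \Delta/q^i\right\rceil-1\right)\le n-k+1,
\]
which is the claimed bound.

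For the ``hence'' part, note that each summand $\lceil \Delta/q^i\rceil-1$ is nonnegative, because $\Delta\ge 1$. So the left side is at least $\Delta$, and the inequality always implies the Singleton-type condition $\Delta\le n-k+1$. If the correction sum is at least $1$, we get $\Delta\le n-k$, which is strictly stronger. This is the sense in which the bound is a strict sharpening.

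For the sufficient condition, assume $k\ge 2$ and $\Delta\ge q+1$. Then the $i=1$ term exists, and $\Delta/q>1$ gives $\lceil \Delta/q\rceil\ge 2$. That term alone contributes at least $1$, and the remaining terms are nonnegative, so the sum is at least $1$.

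There is no real obstacle here. The only point needing care is that the Singleton-type conclusion is recovered precisely because all correction terms are nonnegative; that is where $\Delta\ge 1$ is used.
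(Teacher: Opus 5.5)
Your proposal is correct and follows the paper's argument essentially step for step. You start from Theorem~\ref{thm:main}, write each $i\ge 1$ term as $1+(\lceil \Delta/q^i\rceil-1)$, move the $k-1$ ones to the right-hand side, and use $\lceil \Delta/q\rceil\ge 2$ when $\Delta\ge q+1$. Your explicit remark that the correction terms are nonnegative because $\Delta\ge 1$ is a small clarification that the paper leaves implicit.
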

\begin{proof}
Starting from Theorem~\ref{thm:main},
\[
n\ge \sum_{i=0}^{k-1}\left\lceil \frac{\Delta}{q^i}\right\rceil
= \Delta + \sum_{i=1}^{k-1}\left(1+\left(\left\lceil \frac{\Delta}{q^i}\right\rceil-1\right)\right),
\]
because each term with $i\ge 1$ is at least $1$. Therefore
\[
n\ge \Delta + (k-1)+ \sum_{i=1}^{k-1}\left(\left\lceil \frac{\Delta}{q^i}\right\rceil-1\right),
\]
which rearranges to the claimed inequality.

If the correction term is at least $1$, then the resulting inequality is strictly stronger than the sharpened Singleton-type bound $\Delta\le n-k+1$.
A sufficient condition is $k\ge 2$ and $\Delta\ge q+1$.
Indeed, in this case,
\[
\left\lceil \frac{\Delta}{q}\right\rceil-1 \ge 1.
\]
\end{proof}

We next give a direct upper bound on $\tau$.

\begin{lemma}\label{lem:two-term}
Let $q\ge2$.
For every integer $a\ge 0$,
$$
a+\left\lceil \frac{a}{q}\right\rceil
=
\left\lceil \frac{q+1}{q}\,a\right\rceil.
$$
Consequently, for every integer $N$,
$$
a+\left\lceil \frac{a}{q}\right\rceil \le N
\quad\Longleftrightarrow\quad
a\le \left\lfloor \frac{qN}{q+1}\right\rfloor.
$$
\end{lemma}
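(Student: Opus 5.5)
The plan is to prove the identity first and then get the equivalence from it.

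For the identity, we use that $a$ is an integer. Since $a$ is an integer, adding it inside a ceiling commutes with the ceiling:
\[
a+\left\lceil \frac{a}{q}\right\rceil=\left\lceil a+\frac{a}{q}\right\rceil=\left\lceil \frac{q+1}{q}\,a\right\rceil .
\]
This is the whole first part. The standing facts $\lceil x+m\rceil=\lceil x\rceil+m$ for $m\in\mathbb Z$ and $\lceil x\rceil\le N \iff x\le N$ for integer $N$ are the only tools needed.

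For the equivalence, we rewrite the left-hand condition via the identity as $\lceil (q+1)a/q\rceil\le N$. Since $N$ is an integer, this holds if and only if $(q+1)a/q\le N$. Multiplying by $q/(q+1)>0$, this is equivalent to $a\le qN/(q+1)$. Because $a$ is an integer, that is in turn equivalent to $a\le\lfloor qN/(q+1)\rfloor$.

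Each step is a two-sided equivalence, so both directions follow at once. There is no real obstacle. The only points to check are that every use of "integer below a real iff below its floor/ceiling" is applied to a quantity that is genuinely an integer ($a$ and $N$), and that the scaling factor $q/(q+1)$ is positive. The hypothesis $q\ge 2$ (indeed any $q\ge1$) guarantees that positivity. The case $N<0$ needs no separate treatment: both sides are then false for $a\ge0$, and this is already captured by the chain of equivalences.
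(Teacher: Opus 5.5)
Your proof is correct. The equivalence part is the same chain of equivalences the paper uses. For the identity you take a slightly different and more direct route.

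The paper writes $a=qb+r$ with $0\le r\le q-1$. It then evaluates both sides explicitly as $(q+1)b+r+\mathbf{1}_{\{r>0\}}$, using $\lceil r/q\rceil=\mathbf{1}_{\{r>0\}}$. You instead invoke the shift rule $\lceil x+m\rceil=\lceil x\rceil+m$ for $m\in\mathbb Z$ once. The paper's remainder computation is in effect a hand verification of that rule in this instance.

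Your version is shorter and more general. It needs neither $a\ge 0$ nor $q$ to be an integer; only $q>0$ matters, and that is used in the equivalence step. The paper's version produces an explicit closed form for the common value, but that form is not used anywhere later.
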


\begin{proof}
Write $a=qb+r$ with $0\le r\le q-1$. Then
$$
a+\left\lceil \frac{a}{q}\right\rceil
= qb+r+b+\mathbf{1}_{\{r>0\}}
= (q+1)b+r+\mathbf{1}_{\{r>0\}}.
$$
On the other hand,
$$
\left\lceil \frac{q+1}{q}\,a\right\rceil
=
\left\lceil (q+1)b+r+\frac{r}{q}\right\rceil
=
(q+1)b+r+\mathbf{1}_{\{r>0\}}.
$$
This proves the identity.

For the equivalence, the identity shows that
$$
a+\left\lceil \frac{a}{q}\right\rceil \le N
\quad\Longleftrightarrow\quad
\left\lceil \frac{q+1}{q}\,a\right\rceil \le N
\quad\Longleftrightarrow\quad
\frac{q+1}{q}\,a \le N
\quad\Longleftrightarrow\quad
a\le \frac{qN}{q+1}.
$$
Since $a$ is integral, this is equivalent to
$$
a\le \left\lfloor \frac{qN}{q+1}\right\rfloor.
$$
\end{proof}

\begin{lemma}\label{lem:invert-delta}
Let $L\ge1$, and let $\tau$ be a nonnegative integer.
For every integer $M\ge 1$,
$$
\delta_L(\tau)\le M
\quad\Longleftrightarrow\quad
\tau\le \left\lfloor \frac{LM-1}{L+1}\right\rfloor.
$$
\end{lemma}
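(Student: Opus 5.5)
The plan is to reduce the inequality $\delta_L(\tau)\le M$ to a condition on $\tau$ alone. I will use the representation already set up in the proof of Lemma~\ref{lem:threshold}: write $\tau=Lu+r$ with $u\ge 0$ and $0\le r\le L-1$. In these terms $\delta_L(\tau)=(L+1)u+r+1$. The goal is then to show that both sides of the claimed equivalence describe exactly the same set of $\tau$.

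\textbf{Step 1: monotonicity of $\delta_L$.} From the computation in Lemma~\ref{lem:threshold},
\[
\delta_L(\tau+1)-\delta_L(\tau)=1+\left(\left\lfloor \tfrac{\tau+1}{L}\right\rfloor-\left\lfloor \tfrac{\tau}{L}\right\rfloor\right)\ge 1,
\]
so $\delta_L$ is strictly increasing. Hence $\{\tau\ge0:\delta_L(\tau)\le M\}$ is an initial segment $\{0,1,\dots,T\}$, where $T$ is the largest $\tau$ with $\delta_L(\tau)\le M$. This set is nonempty because $\delta_L(0)=1\le M$. It therefore suffices to show
\[
T=\left\lfloor \frac{LM-1}{L+1}\right\rfloor.
\]

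\textbf{Step 2: identify the threshold $T$ directly.} Write $M-1=(L+1)u_0+r_0$ with $0\le r_0\le L$. I split into two cases according to $r_0$.

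\emph{Case $r_0\le L-1$.} Then $M=\delta_L(Lu_0+r_0)$, so $\delta_L$ hits $M$ exactly and $T=Lu_0+r_0$.

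\emph{Case $r_0=L$.} Then $M-1=(L+1)u_0+L$. This value is never attained by $\delta_L-1$, since $\delta_L-1$ only takes residues $0,\dots,L-1$ modulo $L+1$. The largest value of $\delta_L$ below $M$ is $(L+1)u_0+L$, attained at $\tau=Lu_0+L-1$. The next value, at $\tau=L(u_0+1)$, is $(L+1)(u_0+1)+1=M+1>M$. So $T=Lu_0+L-1$.

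\textbf{Step 3: compare with the formula.} Compute
\[
LM-1=L(L+1)u_0+Lr_0+L-1.
\]

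\emph{Case $r_0\le L-1$.} Write $Lr_0+L-1=(L+1)r_0+(L-1-r_0)$, where $0\le L-1-r_0\le L-1<L+1$. Hence
\[
\left\lfloor \frac{LM-1}{L+1}\right\rfloor=Lu_0+r_0=T.
\]

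\emph{Case $r_0=L$.} Here $Lr_0+L-1=L^2+L-1=(L+1)(L-1)+L$, with remainder $L<L+1$. Hence
\[
\left\lfloor \frac{LM-1}{L+1}\right\rfloor=Lu_0+L-1=T.
\]

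In both cases the formula gives exactly $T$, which proves the equivalence.

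\textbf{Main obstacle.} The only delicate point is the residue case $r_0=L$ in Steps~2 and~3, where $M$ is skipped by $\delta_L$. Since $\delta_L-1$ misses exactly the residue $L$ modulo $L+1$, this case must be handled separately in both steps.
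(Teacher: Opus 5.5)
Your proof is correct, but it takes a different route from the paper. The paper has no case analysis. It rewrites $\tau+\lfloor \tau/L\rfloor$ as $\lfloor (L+1)\tau/L\rfloor$, which is valid because $\tau$ is an integer. It then uses that, for integer $M$, $\lfloor x\rfloor\le M-1$ holds if and only if $x<M$. Finally it converts the strict inequality $\tau<LM/(L+1)$ into $\tau\le\lceil LM/(L+1)\rceil-1=\lfloor (LM-1)/(L+1)\rfloor$.

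You instead prove that $\delta_L$ is strictly increasing, which reduces the claim to computing the threshold $T$. You then find $T$ by splitting on the residue $r_0$ of $M-1$ modulo $L+1$. All your computations check out, including the skipped case $r_0=L$, where $\delta_L(Lu_0+L-1)=M-1$ and $\delta_L(L(u_0+1))=M+1$.

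The paper's chain of equivalences is shorter. Its floor and ceiling identities absorb your two cases automatically, and monotonicity is never needed explicitly. Your argument is more elementary and reuses the decomposition $\tau=Lu+r$ from Lemma~\ref{lem:threshold}. As a by-product, it describes the image of $\delta_L$ exactly: it misses precisely the positive multiples of $L+1$. That tells you when the bound $\delta_L(\tau)\le M$ can be attained with equality, which the paper's derivation does not show directly.
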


\begin{proof}
By definition,
$$
\delta_L(\tau)=\tau+\left\lfloor \frac{\tau}{L}\right\rfloor+1.
$$
Also
$$
\tau+\left\lfloor \frac{\tau}{L}\right\rfloor
=
\left\lfloor \tau+\frac{\tau}{L}\right\rfloor
=
\left\lfloor \frac{L+1}{L}\,\tau\right\rfloor.
$$
Hence
$$
\delta_L(\tau)\le M
\quad\Longleftrightarrow\quad
\left\lfloor \frac{L+1}{L}\,\tau\right\rfloor \le M-1
\quad\Longleftrightarrow\quad
\frac{L+1}{L}\,\tau < M.
$$
Because $\tau$ is integral, the latter is equivalent to
$$
\tau < \frac{LM}{L+1}
\quad\Longleftrightarrow\quad
\tau \le \left\lceil \frac{LM}{L+1}\right\rceil -1
=
\left\lfloor \frac{LM-1}{L+1}\right\rfloor.
$$
\end{proof}

\begin{corollary}\label{cor:explicit}
Assume $1\le L\le q-1$ and $k\ge 2$.
Let $C$ be a $q$-ary linear $[n,k]$ code that is $(\tau,L)$-list-decodable.
Then
$$
\delta_L(\tau)\le \left\lfloor \frac{q(n-k+2)}{q+1}\right\rfloor,
$$
and therefore
$$
\tau \le
\left\lfloor
\frac{L\left\lfloor \frac{q(n-k+2)}{q+1}\right\rfloor -1}{L+1}
\right\rfloor.
$$
\end{corollary}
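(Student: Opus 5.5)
The plan is to chain three earlier results: the $s=2$ case of Corollary~\ref{cor:hierarchy}, the two-term identity of Lemma~\ref{lem:two-term}, and the inversion of $\delta_L$ in Lemma~\ref{lem:invert-delta}. Since $k\ge 2$ and the hypotheses of Theorem~\ref{thm:main} hold ($1\le L\le q-1$, $C$ linear $[n,k]$ and $(\tau,L)$-list-decodable), Corollary~\ref{cor:hierarchy} with $s=2$ gives
\[
\delta_L(\tau)+\left\lceil \frac{\delta_L(\tau)}{q}\right\rceil\le n-k+2 .
\]

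Next I apply Lemma~\ref{lem:two-term} with $a=\delta_L(\tau)$, which is a nonnegative integer (indeed $\ge 1$), and $N=n-k+2$. The equivalence in that lemma turns the displayed inequality into
\[
\delta_L(\tau)\le M:=\left\lfloor \frac{q(n-k+2)}{q+1}\right\rfloor,
\]
which is the first claim.

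For the bound on $\tau$, I invoke Lemma~\ref{lem:invert-delta} with this $M$. Its hypothesis is $M\ge 1$, and this holds automatically: $\delta_L(\tau)\ge 1$ and $\delta_L(\tau)\le M$. The lemma then gives
\[
\tau\le \left\lfloor \frac{LM-1}{L+1}\right\rfloor,
\]
which is exactly the stated bound once $M$ is substituted back.

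There is no real obstacle here. The only points to check are the side conditions: $k\ge 2$, so that $s=2$ is admissible in Corollary~\ref{cor:hierarchy}; $a\ge 0$ in Lemma~\ref{lem:two-term}; and $M\ge 1$ in Lemma~\ref{lem:invert-delta}. All three follow directly from the hypotheses and from $\delta_L(\tau)\ge 1$.
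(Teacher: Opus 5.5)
Your proposal is correct and follows the paper's proof exactly. It uses the $s=2$ case of Corollary~\ref{cor:hierarchy}, then Lemma~\ref{lem:two-term} with $a=\delta_L(\tau)$ and $N=n-k+2$, then Lemma~\ref{lem:invert-delta} with $M=\lfloor q(n-k+2)/(q+1)\rfloor$; your explicit check of the side conditions, in particular $M\ge 1$ from $1\le\delta_L(\tau)\le M$, is a small point of rigor that the paper leaves implicit.
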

\begin{proof}
From Corollary~\ref{cor:hierarchy} with $s=2$,
$$
\delta_L(\tau)+\left\lceil \frac{\delta_L(\tau)}{q}\right\rceil \le n-k+2.
$$
Lemma~\ref{lem:two-term} therefore yields
$$
\delta_L(\tau)\le \left\lfloor \frac{q(n-k+2)}{q+1}\right\rfloor.
$$
Applying Lemma~\ref{lem:invert-delta} with
$$
M=\left\lfloor \frac{q(n-k+2)}{q+1}\right\rfloor
$$
gives
$$
\tau \le
\left\lfloor
\frac{L\left\lfloor \frac{q(n-k+2)}{q+1}\right\rfloor -1}{L+1}
\right\rfloor.
$$
\end{proof}

We next connect this result with the literature on higher-order MDS codes.
For this purpose, we pass to average-radius list decoding~\cite{GuruswamiNarayanan2014}.
Let $\tau\ge 0$ be a real number, and let $L\ge 1$ be an integer.
We say that $C$ is \emph{strongly $(\tau,L)$-list-decodable} if there do not exist
$$
y\in \F_q^n,
\qquad
c^{(0)},c^{(1)},\dots,c^{(L)}\in C
\quad\text{distinct},
$$
such that
$$
\sum_{j=0}^{L}\wt\bigl(y-c^{(j)}\bigr)\le (L+1)\tau.
$$
This is the usual average-radius condition.
Following Roth~\cite{Roth2022}, we say that a linear $[n,k]$ code is \emph{LD-MDS($L$)} if it is strongly
$$
\left(\frac{L(n-k)}{L+1},L\right)\text{-list-decodable}.
$$
In other words, LD-MDS($L$) means exact attainment of the average-radius generalized Singleton bound.

The same support-splitting idea gives a sharper bound for average-radius list decoding.
\begin{theorem}\label{thm:strong-ghw-obstruction}
Let $L\ge 1$, let $1\le r\le k$, and assume that
$$
L\le q^r-1.
$$
If $C$ is strongly $(\tau,L)$-list-decodable, then
$$
d_r(C)\ge \left\lfloor \frac{(L+1)\tau}{L}\right\rfloor +1.
$$
\end{theorem}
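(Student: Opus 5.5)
We argue by contradiction, mirroring the proof of Theorem~\ref{thm:ghw-obstruction}. Suppose $d_r(C)\le \left\lfloor \frac{(L+1)\tau}{L}\right\rfloor$. We pick an $r$-dimensional subcode $D\le C$ with $S=\supp(D)$ and $s=|S|=d_r(C)$. Since $|D|=q^r\ge L+1$, we choose distinct codewords $c^{(0)},\dots,c^{(L)}\in D$. We then partition $S=S_0\sqcup\cdots\sqcup S_L$ arbitrarily; the balanced choice is not even needed here, because only the total will matter. Next we define $y$ by $y_i=c^{(j)}_i$ for $i\in S_j$ and $y_i=0$ for $i\notin S$.

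As before, $y$ and $c^{(j)}$ agree on $S_j$ and outside $S$, so $\wt(y-c^{(j)})\le s-|S_j|$. Summing over $j$ and using $\sum_j|S_j|=s$ gives
\[
\sum_{j=0}^{L}\wt\bigl(y-c^{(j)}\bigr)\le (L+1)s-s=Ls .
\]
This is the key point: the average-radius condition only sees the sum, so the rounding loss $\lfloor s/(L+1)\rfloor$ that appeared in the worst-case argument disappears. This is exactly why the bound is sharper.

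It remains to show $Ls\le (L+1)\tau$, which would contradict strong $(\tau,L)$-list-decodability. From $s\le \left\lfloor \frac{(L+1)\tau}{L}\right\rfloor\le \frac{(L+1)\tau}{L}$ we get $Ls\le (L+1)\tau$ directly. This holds also for real $\tau$, since $s$ is an integer not exceeding $(L+1)\tau/L$. Hence $d_r(C)\ge \left\lfloor \frac{(L+1)\tau}{L}\right\rfloor+1$.

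There is no real obstacle. The only care point is the floor manipulation when $\tau$ is real rather than integral, and the contrapositive formulation handles it cleanly: $s$ is an integer, so $s\le\lfloor x\rfloor$ if and only if $s\le x$.
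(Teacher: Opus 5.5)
Your proposal is correct and follows essentially the same argument as the paper. You take an arbitrary partition of $S=\supp(D)$ into $L+1$ parts, sum the distances to get $\sum_j \wt(y-c^{(j)})\le Ls\le (L+1)\tau$, and observe that integrality of $s$ handles real $\tau$; this matches the paper's proof step for step.
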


\begin{proof}
Assume, for contradiction, that
\[
d_r(C)\le \left\lfloor \frac{(L+1)\tau}{L}\right\rfloor.
\]
Choose an $r$-dimensional subcode $D\le C$ with
\[
|\supp(D)|=d_r(C).
\]
Write $S=\supp(D)$ and $s=|S|$. Then
\[
Ls\le (L+1)\tau.
\]
Since $\dim(D)=r$ and $L\le q^r-1$, the subcode $D$ contains at least $L+1$ distinct codewords. Choose
\[
c^{(0)},c^{(1)},\dots,c^{(L)}\in D.
\]
Partition $S$ as a disjoint union
\[
S=S_0\sqcup S_1\sqcup\cdots\sqcup S_L,
\]
where some parts may be empty.
Define $y\in\F_q^n$ by
\[
y_i=
\begin{cases}
c^{(j)}_i, & i\in S_j,\\
0, & i\notin S.
\end{cases}
\]
For each $j$, the codeword $c^{(j)}$ is zero outside $S$.
Also, $y$ agrees with $c^{(j)}$ on $S_j$.
Hence
\[
\wt\bigl(y-c^{(j)}\bigr)\le s-|S_j|.
\]
Summing over $j=0,1,\dots,L$ gives
\[
\sum_{j=0}^{L}\wt\bigl(y-c^{(j)}\bigr)
\le
\sum_{j=0}^{L}(s-|S_j|)
=(L+1)s-|S|=Ls\le (L+1)\tau.
\]
Thus $y$ and the $L+1$ distinct codewords $c^{(0)},c^{(1)},\dots,c^{(L)}$ violate strong $(\tau,L)$-list-decodability.
This is a contradiction.
Therefore
\[
d_r(C)\ge \left\lfloor \frac{(L+1)\tau}{L}\right\rfloor +1.
\]
\end{proof}

\begin{corollary}\label{cor:ldmds-ghw}
Let $C$ be a $q$-ary linear $[n,k]$ code.
Assume that $C$ is LD-MDS($L$).
If $1\le r\le k$ and $L\le q^r-1$, then
$$
d_r(C)\ge n-k+1.
$$
\end{corollary}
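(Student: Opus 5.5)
This is a direct specialization of Theorem~\ref{thm:strong-ghw-obstruction}. By definition, $C$ being LD-MDS($L$) means that $C$ is strongly $(\tau,L)$-list-decodable with the real radius
\[
\tau=\frac{L(n-k)}{L+1}.
\]
The hypotheses $1\le r\le k$ and $L\le q^r-1$ are exactly those of Theorem~\ref{thm:strong-ghw-obstruction}. That theorem therefore applies with this value of $\tau$. Note that its proof never used integrality of $\tau$; it only compared $Ls$ with $(L+1)\tau$.

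Applying the theorem gives
\[
d_r(C)\ge \left\lfloor \frac{(L+1)\tau}{L}\right\rfloor+1.
\]
Substituting the value of $\tau$ yields
\[
\frac{(L+1)\tau}{L}=\frac{(L+1)}{L}\cdot\frac{L(n-k)}{L+1}=n-k.
\]
Since $n-k$ is an integer, the floor is exactly $n-k$, and so $d_r(C)\ge n-k+1$.

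The only point needing care is that Theorem~\ref{thm:strong-ghw-obstruction} is stated for real $\tau$, whereas the earlier list-decoding definition used an integer radius. The strong list-decodability definition explicitly allows real $\tau\ge 0$, so no issue arises. In the proof, the assumption $d_r(C)\le \lfloor (L+1)\tau/L\rfloor$ gives $s\le (L+1)\tau/L$, hence $Ls\le (L+1)\tau$, and this step is valid for real $\tau$.

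As a remark, this matches the generalized Singleton bound $d_r(C)\le n-k+r$ only at $r=1$. For $r\ge 2$ it is a lower bound, not an equality statement. Nothing further is needed.
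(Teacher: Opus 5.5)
Your proposal is correct and matches the paper's proof: both specialize Theorem~\ref{thm:strong-ghw-obstruction} to $\tau=L(n-k)/(L+1)$ and use the fact that $\lfloor n-k\rfloor=n-k$. Your check that the theorem's proof stays valid for non-integer $\tau$ is a detail the paper leaves implicit.
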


\begin{proof}
Since $C$ is LD-MDS($L$), it is strongly
$$
\left(\frac{L(n-k)}{L+1},L\right)\text{-list-decodable}.
$$
Applying Theorem~\ref{thm:strong-ghw-obstruction} with
$$
\tau=\frac{L(n-k)}{L+1}
$$
gives
$$
d_r(C)
\ge
\left\lfloor \frac{L+1}{L}\cdot \frac{L(n-k)}{L+1}\right\rfloor +1
=
\lfloor n-k\rfloor +1
=
 n-k+1.
$$
\end{proof}

\begin{corollary}\label{cor:ldmds-mds-griesmer}
Assume $1\le L\le q-1$ and $k\ge 2$.
Let $C$ be a $q$-ary linear $[n,k]$ code that is LD-MDS($L$).
Then $C$ is an MDS code.
Consequently,
$$
n \ge \sum_{i=0}^{k-1}\left\lceil \frac{n-k+1}{q^i}\right\rceil.
$$
In particular, if $k\ge 2$, then
$$
n-k+1\le q.
$$
\end{corollary}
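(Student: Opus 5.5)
The plan is to apply Corollary~\ref{cor:ldmds-ghw} with $r=1$. The hypothesis $1\le L\le q-1$ says exactly that $L\le q^1-1$, and $k\ge 2\ge 1$, so the corollary is available. It gives
\[
d(C)=d_1(C)\ge n-k+1.
\]
The classical Singleton bound $d\le n-k+1$ then forces $d=n-k+1$, so $C$ is MDS.

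Next, feed $d=n-k+1$ into the Griesmer bound (Theorem~\ref{thm:griesmer}). This yields
\[
n\ge \sum_{i=0}^{k-1}\left\lceil \frac{n-k+1}{q^i}\right\rceil,
\]
which is the displayed consequence.

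For the final claim, set $d=n-k+1$ and keep only the first two terms of the sum. Each of the remaining $k-2$ terms is at least $1$, as in Corollary~\ref{cor:hierarchy}. This gives
\[
d+\left\lceil \frac{d}{q}\right\rceil+(k-2)\le n=d+k-1,
\]
that is, $\lceil d/q\rceil\le 1$. Since $d\ge 1$, this means $d\le q$, which is $n-k+1\le q$.

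The only point needing care is checking that the hypotheses of Corollary~\ref{cor:ldmds-ghw} hold with $r=1$, and that the LD-MDS radius $L(n-k)/(L+1)$, which may be non-integral, is handled there. It is, because Theorem~\ref{thm:strong-ghw-obstruction} allows real $\tau$. The rest is routine.
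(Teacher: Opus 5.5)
Your proposal is correct and matches the paper's proof step for step. Corollary~\ref{cor:ldmds-ghw} with $r=1$, together with the Singleton bound, gives MDS. The Griesmer bound then gives the displayed inequality, and bounding the terms with $i\ge 2$ below by $1$ yields $\lceil (n-k+1)/q\rceil\le 1$, that is, $n-k+1\le q$.

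Your remark about the radius $L(n-k)/(L+1)$ is also accurate: it may be non-integral, but this is harmless because Theorem~\ref{thm:strong-ghw-obstruction} is stated for real $\tau$.
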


\begin{proof}
Since $L\le q-1$, we may apply Corollary~\ref{cor:ldmds-ghw} with $r=1$.
This yields
$$
d(C)=d_1(C)\ge n-k+1.
$$
By the ordinary Singleton bound,
$$
d(C)\le n-k+1.
$$
Hence
$$
d(C)=n-k+1,
$$
so $C$ is an MDS code.

Applying the Griesmer bound (Theorem~\ref{thm:griesmer}) to $C$, we obtain
$$
n \ge \sum_{i=0}^{k-1}\left\lceil \frac{d(C)}{q^i}\right\rceil
=
\sum_{i=0}^{k-1}\left\lceil \frac{n-k+1}{q^i}\right\rceil.
$$

Now assume $k\ge 2$.
For every $i\ge 2$, the term $\left\lceil (n-k+1)/q^i\right\rceil$ is at least $1$.
Therefore
$$
n
\ge
(n-k+1)+\left\lceil \frac{n-k+1}{q}\right\rceil +(k-2).
$$
Rearranging gives
$$
\left\lceil \frac{n-k+1}{q}\right\rceil\le 1,
$$
which is equivalent to $n-k+1\le q$.
\end{proof}
\begin{remark}
Corollary~\ref{cor:ldmds-mds-griesmer} also gives an upper bound on the redundancy.
In this parameter range, it slightly improves Theorem~10 of Roth~\cite{Roth2022}.
\end{remark}

\section{A construction of optimal list-decodable linear codes}
In this section, we construct linear codes that are optimal for the Griesmer-type list-decoding bound.
These codes do not attain the Singleton-type bound.
At the relevant radius, they are genuinely list-decodable rather than uniquely decodable.

Let $q=3^a$ with $a\ge1$, let $L=2$, and set
\[
\tau=\frac{2q}{3},
\qquad
\Delta=\delta_2(\tau)=\tau+\left\lfloor\frac{\tau}{2}\right\rfloor+1=q+1.
\]
Let $\PP^1(\F_q)$ be the projective line.
Choose two distinct projective points $P_1,P_2\in\PP^1(\F_q)$.
We form a $2\times(q+3)$ generator matrix $G$ as follows.
First, take one nonzero representative column from every point of $\PP^1(\F_q)$.
Then add one extra copy of the representative for each of $P_1$ and $P_2$.
Let
\[
C=\{uG:u\in\F_q^2\}\le\F_q^{q+3}.
\]

\begin{proposition}
The code $C$ is a $q$-ary linear code with parameters
\[
[n,k,d]=[q+3,\,2,\,q+1].
\]
\end{proposition}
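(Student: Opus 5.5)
The plan is to read off the weight of each codeword directly from the columns of $G$ and minimize over nonzero messages. The length is $n=q+3$ by construction: the projective line $\PP^1(\F_q)$ has $q+1$ points, and we add two extra columns. Write the columns as $g_1,\dots,g_{q+3}\in\F_q^2$.

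First we check that $k=2$. The columns include representatives of at least two distinct projective points, which are linearly independent. Hence $G$ has rank $2$, the map $u\mapsto uG$ is injective, and $\dim C=2$.

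For the minimum distance, fix $u\in\F_q^2\setminus\{0\}$. The coordinate $(uG)_i=u\cdot g_i$ vanishes exactly when $g_i$ lies on the line $u^\perp=\{x:u\cdot x=0\}$, which is a $1$-dimensional subspace of $\F_q^2$. So $u^\perp$ is a single projective point $Q_u\in\PP^1(\F_q)$. Hence
\[
\wt(uG)=(q+3)-m(Q_u),
\]
where $m(Q)$ is the number of columns of $G$ representing $Q$. By construction $m(P_1)=m(P_2)=2$ and $m(Q)=1$ for every other point.

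This gives the weight distribution. We have $\wt(uG)=q+1$ whenever $Q_u\in\{P_1,P_2\}$, and $\wt(uG)=q+2$ otherwise. Since $u\mapsto Q_u$ is onto $\PP^1(\F_q)$, the value $q+1$ is attained, so $d=q+1$.

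The only point needing care is that the zero coordinates of $uG$ are exactly the columns on one projective point, with multiplicity counted. Once that is stated, the rest is bookkeeping. As a sanity check, the Griesmer bound of Theorem~\ref{thm:griesmer} gives $(q+1)+\lceil (q+1)/q\rceil=q+3$, consistent with $n$.
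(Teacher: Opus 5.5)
Your proof is correct and follows the paper's argument essentially verbatim: $k=2$ because two columns are linearly independent, and $\wt(uG)=n-m(u^\perp)$ with maximal multiplicity $2$ at $P_1,P_2$, which gives $d=q+1$. You additionally state explicitly that $u\mapsto Q_u$ is surjective, which is needed for the weight $q+1$ to be attained; the paper uses this fact tacitly.
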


\begin{proof}
Because $q=3^a$, the number $\tau=2q/3$ is an integer and
\[
\left\lfloor\frac{\tau}{2}\right\rfloor
=\left\lfloor\frac{q}{3}\right\rfloor
=\frac{q}{3},
\]
so $\Delta=q+1$.
The columns of $G$ contain representatives of all points of $\PP^1(\F_q)$.
Hence they contain two linearly independent columns.
Therefore $k=2$.

Let $m(P)$ denote the multiplicity of the projective point $P$ among the columns of $G$.
Then $m(P)=2$ for $P=P_1,P_2$, and $m(P)=1$ otherwise.
For a nonzero row vector $u\in\F_q^2$, the codeword $uG$ vanishes exactly on the columns corresponding to one projective point.
We denote this point by $u^\perp\in\PP^1(\F_q)$.
Therefore
\[
\wt(uG)=n-m(u^\perp).
\]
The largest multiplicity is $2$, and it is attained at $P_1$ and $P_2$.
Thus the minimum distance is
\[
d=n-2=(q+3)-2=q+1.
\]
\end{proof}

\begin{theorem}
The code $C$ is list-decodable with radius $2q/3$ and list size $2$.
It is not list-decodable with radius $2q/3$ and list size $1$.
It attains equality in the bound of Theorem~\ref{thm:main}, and it satisfies
\[
\delta_2(2q/3)<n-k+1.
\]
\end{theorem}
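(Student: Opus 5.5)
The plan is to treat the four assertions separately; only the first needs a real argument. We use the parameters $n=q+3$, $k=2$, $d=q+1$ from the preceding proposition, together with $\tau=2q/3$ and $\Delta=\delta_2(\tau)=q+1$.

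\emph{Equality in Theorem~\ref{thm:main} and the strict Singleton-type inequality.} These are direct computations. With $k=2$ the bound reads
\[
\left\lceil \Delta\right\rceil+\left\lceil \Delta/q\right\rceil=(q+1)+2=q+3=n,
\]
so equality holds. Also $\delta_2(2q/3)=q+1<q+2=n-k+1$.

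\emph{Failure of $(2q/3,1)$-list-decodability.} Take two codewords $c,c'$ with $\wt(c-c')=q+1$, which exist because $d=q+1$. Build $y$ to agree with $c$ on $\lceil (q+1)/2\rceil$ of the coordinates where $c\ne c'$, and with $c'$ everywhere else. Then $y$ is within distance $\lceil (q+1)/2\rceil$ of both $c$ and $c'$. Since $q\ge 3$, we have $\lceil (q+1)/2\rceil\le 2q/3$, so this ball contains two codewords.

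\emph{$(2q/3,2)$-list-decodability (the main step).} Suppose, for contradiction, that some $y$ is within distance $\tau$ of three distinct codewords $c_0,c_1,c_2$. Let $A_i$ be the set of coordinates where $c_i$ agrees with $y$, so that
\[
|A_i|\ge n-\tau=q/3+3.
\]
For $i\ne j$, the set $A_i\cap A_j$ lies inside the zero set of the nonzero codeword $c_i-c_j$. By the weight computation in the preceding proposition, that zero set is exactly the block of columns belonging to one projective point, so it has size $m(\cdot)\le 2$. Inclusion--exclusion together with $|A_0\cup A_1\cup A_2|\le n$ gives
\[
q+9-\sum_{i<j}|A_i\cap A_j|+|A_0\cap A_1\cap A_2|\le q+3 .
\]
Hence $\sum_{i<j}|A_i\cap A_j|-|A_0\cap A_1\cap A_2|\ge 6$. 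This forces every pairwise intersection to have size exactly $2$ and the triple intersection to be empty.

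A pairwise intersection of size $2$ means that the zero point of $c_i-c_j$ is $P_1$ or $P_2$, and that $y$ agrees with both $c_i$ and $c_j$ on both columns of that point. There are three differences but only two points, so by pigeonhole, say, $c_0-c_1$ and $c_1-c_2$ both vanish on the $P_1$-columns. Then $c_0,c_1,c_2$ coincide there. Since $|A_0\cap A_1|=2$ and $A_0\cap A_1$ lies in the $P_1$-columns, $y$ agrees with $c_0$ and $c_1$ on those columns, and therefore also with $c_2$. This makes the triple intersection nonempty, a contradiction.

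The main obstacle is making this equality-case analysis airtight: the inclusion--exclusion bound alone only reaches $n$, and the contradiction comes from the triple-intersection term via the pigeonhole argument on the two double points.
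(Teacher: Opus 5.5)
Your proof is correct, and it organizes the list-decodability step differently from the paper. The paper translates so that the three codewords are $0$, $A=uG$, $B=vG$. It then lower-bounds $\wt(z)+\wt(z-A)+\wt(z-B)$ coordinate by coordinate: a center symbol disagrees with at least two of three distinct symbols, and with at least one of two. It splits into two cases:
\begin{itemize}
\item If $u,v$ are dependent, every coordinate of $\supp(A)$ carries three distinct symbols, giving a sum of at least $2q+2$.
\item If $u,v$ are independent, the collision coordinates lie over three \emph{distinct} projective points of total multiplicity at most $5$, giving a sum of at least $n+(q-2)=2q+1$.
\end{itemize}

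You instead use agreement sets and inclusion--exclusion, bounding each $|A_i\cap A_j|$ by the multiplicity of the zero point of $c_i-c_j$. This needs no dependent/independent case split; your pigeonhole step covers the dependent configuration uniformly. The price is that your first estimate is only borderline, so you must exclude the equality configuration. The pigeonhole argument on the two double points does exactly the work of the paper's observation that three distinct collision points can use $P_1$ and $P_2$ at most once each. The paper's route buys a strict numerical slack ($2q+1$ versus $2q$) with no equality analysis.

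Your inclusion--exclusion actually uses only $\sum_i|A_i|\ge 3(n-\tau)$, not the individual bounds $|A_i|\ge n-\tau$. So, like the paper's sum-of-distances argument, it in fact proves the average-radius (strong) $(2q/3,2)$-list-decodability.

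Your treatment of the other three assertions matches the paper's: the Griesmer equality, the strict Singleton-type inequality, and the failure of unique decoding via splitting a weight-$(q+1)$ difference in half.
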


\begin{proof}
The Griesmer-type lower bound in dimension $2$ is
\[
\Delta+\ceil{\frac{\Delta}{q}}
=(q+1)+\ceil{1+\frac1q}=q+3=n.
\]
Thus equality holds in the bound of Theorem~\ref{thm:main}.
The Singleton-type bound is not tight for this family because
\[
n-k+1=(q+3)-2+1=q+2>q+1=\Delta.
\]

We now prove $(2q/3,2)$-list-decodability.
Put $\rho=2q/3$.
Suppose, for contradiction, that some Hamming ball of radius $\rho$ contains three distinct codewords.
After translating by one of these codewords, we obtain a center $z\in\F_q^n$ and two distinct nonzero codewords $A=uG$ and $B=vG$ such that
\[
0,A,B\in B(z,\rho).
\]
In particular,
\[
\wt(z)+\wt(z-A)+\wt(z-B)\le 3\rho=2q.
\]
We show that the left-hand side is always at least $2q+1$.

First assume that $u$ and $v$ are linearly dependent.
Then $B=\lambda A$ for some $\lambda\in\F_q\setminus\{0,1\}$.
On every coordinate in $\supp(A)$, the three symbols $0,A_i,B_i$ are distinct.
Hence, at such a coordinate, any center symbol disagrees with at least two of these three symbols.
Since $\wt(A)\ge d=q+1$, we get
\[
\wt(z)+\wt(z-A)+\wt(z-B)
\ge 2\wt(A)
\ge 2q+2,
\]
contradicting the preceding upper bound.

Now assume that $u$ and $v$ are linearly independent.
For each $P\in\PP^1(\F_q)$, fix the representative column $g_P$ used in $G$.
The linear map
\[
\F_q^2\longrightarrow \F_q^2,
\qquad x\longmapsto (u\cdot x,\,v\cdot x),
\]
is invertible.
Therefore, as $P$ ranges over $\PP^1(\F_q)$, the pair of symbols $(u\cdot g_P,v\cdot g_P)$ ranges projectively over all of $\PP^1(\F_q)$.
The triple of symbols
\[
0,\quad u\cdot g_P,\quad v\cdot g_P
\]
has fewer than three distinct values exactly when
\[
u\cdot g_P=0,
\qquad
v\cdot g_P=0,
\qquad\text{or}\qquad
(u-v)\cdot g_P=0.
\]
These are three distinct projective points.
Indeed, $u$, $v$, and $u-v$ are nonzero, and no two of them are scalar multiples.

Let $M$ be the total column multiplicity coming from these three special projective points.
Let $E=n-M$ be the total column multiplicity coming from the nonspecial projective points.
Each special point contributes its usual one column.
Among all projective points, only $P_1$ and $P_2$ receive one extra copy.
Therefore
\[
M\le 3+2=5,
\qquad
E=n-M\ge(q+3)-5=q-2.
\]
At a special coordinate, the three symbols $0,A_i,B_i$ have exactly two distinct values.
Thus every center symbol disagrees with at least one of them.
At a nonspecial coordinate, the three symbols are distinct.
Thus every center symbol disagrees with at least two of them.
Consequently
\[
\wt(z)+\wt(z-A)+\wt(z-B)
\ge M+2E=n+E\ge(q+3)+(q-2)=2q+1.
\]
This again contradicts the upper bound $2q$.
Therefore no Hamming ball of radius $2q/3$ contains three codewords.
Hence $C$ is $(2q/3,2)$-list-decodable.

Finally, we prove that $C$ is not $(2q/3,1)$-list-decodable.
Choose a minimum-weight nonzero codeword $A\in C$.
Then $\wt(A)=q+1$.
Since $q$ is odd, $(q+1)/2$ is an integer, and
\[
\frac{q+1}{2}\le \frac{2q}{3}
\]
for every $q\ge3$.
Split the support of $A$ into two subsets of size $(q+1)/2$.
Let a center $z$ agree with $A$ on one subset and with $0$ on the other subset.
Set $z_i=0$ outside $\supp(A)$.
Then both $0$ and $A$ lie in the Hamming ball of radius $2q/3$ centered at $z$.
Thus $C$ is not $(2q/3,1)$-list-decodable.
\end{proof}

\section{Conclusion}
We developed a Griesmer-type bound for list-decodable linear codes in the small-list-size regime.
The key step is a generalized Hamming weight obstruction.
This obstruction shows that list decodability forces suitable generalized Hamming weights to be large.
In particular, for a $q$-ary linear code of minimum distance $d$ and list size $1\le L\le q-1$, we obtain
\[
d\ge \tau+\left\lfloor\frac{\tau}{L}\right\rfloor+1.
\]
This condition plays the same role for list decoding that $d\ge 2\tau+1$ plays for unique decoding.
Combining it with the classical Griesmer bound gives a $q$-dependent lower bound on the block length.

We also gave an explicit family of $q$-ary linear $[q+3,2,q+1]$ codes for $q=3^a$.
These codes are $(2q/3,2)$-list-decodable.
They attain the Griesmer-type bound with equality, but they fail to attain the Singleton-type bound.
Thus the Griesmer-type bound is tight for this family.
It is also a genuine strengthening of the Singleton-type obstruction.

\end{document}